\newenvironment{alphenum}
    {\begin{enumerate}[label=\emph{\alph*}]
    }
    { 
    \end{enumerate}
    }
\newtheorem{theorem}{Theorem}[section]
\newtheorem{proposition}[theorem]{Proposition}
\newtheorem{conjecture}{Conjecture}[section]
\newtheorem{lemma}[theorem]{Lemma}
\theoremstyle{remark}
\theoremstyle{definition}
\newtheorem{definition}{Definition}[section]
\newcommand{\ceil}[1]{\lceil#1\rceil}
\newcommand{\tand}{\textrm{ and }}
\newcommand{\tspace}{\textrm{ }}
\newcommand{\tcomma}{\textrm{, }}
\newcommand{\set}[1]{\{#1\}} 
\newcommand{\R}{\mathbf{R}}
\newcommand{\Z}{\mathbb{Z}}
\DeclareMathOperator*{\E}{\mathbb{E}}
\newcommand{\cc}{\mu^{cc}_q}
\newcommand{\wt}{\mathsf{wt}}
\newcommand{\cX}{\mathcal{X}}
\newcommand{\bcap}{\mathsf{cap}}
\newcommand{\rR}{\mathbf{R}}
\newcommand{\cR}{\mathcal{R}}
\newcommand{\rA}{\mathbf{A}}
\newcommand{\cP}{\mathcal{P}}
\newcommand{\rX}{\mathbf{X}}
\newcommand{\rY}{\mathbf{Y}}
\begin{document}
\title{Bounds on the Zero-Error List-Decoding Capacity of the $q/(q-1)$ Channel}

\author{
  \IEEEauthorblockN{Siddharth Bhandari and Jaikumar Radhakrishnan }
  \IEEEauthorblockA{Tata Institute of Fundamental Research\\
                    Homi Bhabha Road\\ 
                    Mumbai 400005, INDIA\\
                    Email: \{siddharth.bhandari, jaikumar\}@tifr.res.in}

}

\maketitle

\begin{abstract}

We consider the problem of determining the zero-error list-decoding
capacity of the $q/(q-1)$ channel studied by Elias (1988). The
$q/(q-1)$ channel has input and output alphabet consisting of $q$ symbols, say,
$\cX = \{x_1,x_2,\ldots, x_q\}$; when the channel receives an input
$x \in \cX$, it outputs a symbol other than $x$ itself. Let
$n(m,q,\ell)$ be the smallest $n$ for which there is a code
$C \subseteq \cX^n$ of $m$ elements such that for every list
$w_1, w_2, \ldots, w_{\ell+1}$ of distinct code-words from $C$, there
is a coordinate $j \in [n]$ that satisfies $\{w_1[j], w_2[j], \ldots,
w_{\ell+1}[j]\} = \cX$. We show that for $\epsilon<1/6$, for all large $q$ and large enough $m$, $n(m,q, \epsilon q\ln{q}) \geq \Omega(\exp{(q^{1-6\epsilon}/8)}\log_2{m})$.

The lower bound obtained by Fredman and Koml\'{o}s (1984) for perfect hashing implies
that $n(m,q,q-1) = \exp(\Omega(q)) \log_2 m$; similarly,
the lower bound obtained by K\"{o}rner (1986) for nearly-perfect hashing
implies that $n(m,q,q) = \exp(\Omega(q)) \log_2 m$. These results show that the
zero-error list-decoding capacity of the $q/(q-1)$ channel with lists
of size at most $q$ is exponentially small. Extending these bounds,
Chakraborty \emph{et al.} (2006) showed that the capacity remains
exponentially small even if the list size is allowed to be as large as
$1.58q$. Our result implies that the zero-error list-decoding capacity
of the $q/(q-1)$ channel with list size $\epsilon q$ for $\epsilon<1/6$ is $\exp{(\Omega(q^{1-6\epsilon}))}$. This resolves the conjecture raised by Chakraborty \emph{et al.} (2006) about the zero-error list-decoding capcity of the $q/(q-1)$ channel at larger list sizes.
\end{abstract}

\section{Introduction}
We study the zero-error-list-decoding capacity of the $q/(q-1)$
channel. The input and output alphabet of this channel are a
set of $q$ symbols, namely $\cX = \{x_1, x_2, \ldots, x_q\}$;
when the symbol $x \in \cX$ is input, the output symbol can be
anything other than $x$ itself. We wish
to design good error correcting codes for such a channel.  For the
$q/(q-1)$ channel it is impossible to recover the message without
error if the code has at least two code-words: in fact, no matter how
many letters are used for encoding, for every set of up to $(q-1)$ input
code-words, one can construct an output word that is compatible
with \emph{all} of them. It is, however, possible to design codes
where on receiving an output word from the channel, one can narrow down
the input message to a set of size at most $(q-1)$---that is, we can
\emph{list-decode} with lists of size $(q-1)$. 
Such codes have rate exponentially small in $q$.
~\\

\begin{definition}[Code, Rate]
\label{Defn_Rate}
A code $C \subseteq \{x_1,\ldots,x_q\}^n$ is 
an $\ell$-list-decoding-code for the $q/(q-1)$ channel, if for every
output word
$\sigma' \in \cX^n$, we have
$\bigl|\{ \sigma \in \cX^n: \mbox{the input word $\sigma$ is compatible with $\sigma'$} \}\bigr| \leq \ell$.
Let $n(m,q,\ell)$ be the smallest $n$ such that there exists an
$\ell$-list-decoding code for the $q/(q-1)$ channel with $m$
code-words. The zero-error-list-of-$\ell$-rate of $C$, {$|C|=m$}, is given by
$\frac{1}{n} \log_2 (m/\ell)$, and the list-of-$\ell$-capacity of the
$q/(q-1)$ channel, denoted by $\bcap (q,\ell)$, is the least upper bound
on the attainable zero-error-list-of-$\ell$-rate across all  $\ell$-list-decoding-codes.
\end{definition}

The list-of-$2$-capacity of the $3/2$ channel was studied by
Elias~\cite{E88}, who showed that $0.08 \approx\log_2 (3) -
1.5 \leq \bcap(3,2) \leq \log_2(3) -1 \approx 0.58$.  For the $4/3$
channel, Dalai, Guruswami and Radhakrishnan~\cite{DGR2016} showed that
$\bcap(4,3) \leq 6/19 \approx 0.3158$, improving slightly on an earlier upper bound of 0.3512 shown by Arikan~\cite{Arikan94}; it was shown by K\"{o}rner
and Marton~\cite{KM88} that  $\bcap(4,3)
\geq (1/3)\log_2 (32/29)\approx 0.0473$. For general $q$, one can obtain the
following upper bound using a routine probabilistic argument.
\begin{proposition}
$n(m,q,q-1) = \exp(O(q))\lg m$.
\end{proposition}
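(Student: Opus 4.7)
The plan is to observe that for the list size $\ell = q-1$, the defining property of a $(q-1)$-list-decoding code is exactly that of a $q$-perfect hash family: for every $q$ distinct codewords $w_1,\ldots,w_q$, some coordinate $j \in [n]$ must satisfy $\{w_1[j],\ldots,w_q[j]\} = \cX$, i.e., the map $i \mapsto w_i[j]$ is a bijection from the $q$-set of chosen codewords onto $\cX$. Once this reformulation is in hand, the bound follows from a standard probabilistic construction.

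I would choose $w_1,\ldots,w_m$ independently and uniformly at random from $\cX^n$. For any fixed set $S$ of $q$ indices and any fixed coordinate $j$, the probability that $\{w_i[j] : i \in S\}$ equals $\cX$ is exactly $q!/q^q$, since the coordinates are i.i.d.\ uniform on $\cX$. The coordinates are mutually independent, so the probability that no coordinate witnesses the perfect-hash property for $S$ is $(1-q!/q^q)^n \leq \exp(-n\,q!/q^q)$. A union bound over the $\binom{m}{q} \leq m^q$ choices of $S$ shows that the probability of failure is at most $m^q \exp(-n\,q!/q^q)$.

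Next I would choose $n$ so that this expression is strictly less than $1/2$; taking $n \geq 2\,(q^q/q!)\,q\ln m$ suffices for large $m$. By Stirling's approximation, $q^q/q! = \Theta(e^q/\sqrt{q})$, so this lower bound on $n$ is $\exp(O(q))\,\log_2 m$. To finish, I would also need the $w_i$ to be distinct, but the probability that any two coincide is at most $\binom{m}{2}q^{-n}$, which is negligible compared to the other error term for the chosen $n$. Hence with positive probability the random construction produces $m$ distinct codewords satisfying the $(q-1)$-list-decoding property.

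I do not expect any genuine obstacle here: the argument is entirely routine once the reformulation as perfect hashing is made, and the only computation to be careful with is the Stirling estimate $q!/q^q = e^{-q}\sqrt{2\pi q}\,(1+o(1))$, which converts the lower bound on $n$ into the claimed $\exp(O(q))\log_2 m$ form.
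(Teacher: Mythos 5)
Your proof is correct and is exactly the ``routine probabilistic argument'' the paper alludes to but does not write out: the reformulation as a $q$-perfect hash family, the per-coordinate success probability $q!/q^q$, the union bound over $\binom{m}{q}$ subsets, and the Stirling estimate $q^q/q! = \Theta(e^q/\sqrt{q})$ are all sound and give the claimed $\exp(O(q))\log_2 m$ bound. The brief treatment of codeword distinctness is a reasonable finishing touch and matches what the paper would require.
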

This implies that the $\bcap(q,q-1) = \exp(-O(q))$. So for each fixed
$q$ we do have codes with positive rate, but the rate promised by this
construction goes to zero exponentially with $q$. Fredman and
Koml\'{o}s~\cite{FK84} showed that this exponential deterioration is
inevitable; K\"{o}rner showed that $\bcap(q,q)
= \exp(-\Omega(q))$. On the other hand, it can be shown that
$\bcap(q,\ceil{q \ln q}) = 1/q$, and that for all functions
$\ell: \Z \rightarrow \Z$ we have $\bcap(q,\ell(q)) \geq 1/q$. Thus,
the list-of-$\ell$-capacity of the $q/(q-1)$ channel cannot be better
than $1/q$ unless $\ell$ is allowed to grow with $m$.

We thus have the following situation. The list-of-$\ell$-rate of any
code reaches the optimal value of $1/q$ when the list-size is about
$q \ln q$; however, the list-of-$(q-1)$ (as well as list-of-$q$) rate is
exponentially small in $q$. It is interesting, therefore, to study the
trade-off between the list size and the rate, and determine how
the rate changes from inverse polynomial in $q$ to exponentially small
in $q$. Chakraborty, Radhakrishnan, Raghunathan and
Sasatte~\cite{CRRS} addressed this question and showed the following.

\begin{theorem} \label{thm:Chetal}
For every $\epsilon > 0$, there is a $\delta > 0$ such that for all
large $q$ and large enough $m$, we have $n(m,q,(\eta
- \epsilon)q) \geq \exp(\delta q) \log_2 m$, where $\eta =
e/(e-1) \approx 1.58$. Thus, $\bcap(q,(\eta-\epsilon)q) = \exp(-\Omega(q))$.
\end{theorem}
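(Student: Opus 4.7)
The plan is to combine (i) a coupon-collector bound on single-coordinate rainbow coverage with (ii) a Fredman--Koml\'os-style geometric amplification, so that both the $\exp(\delta q)$ factor and the $\log_2 m$ factor emerge. Assume $C\subseteq\cX^n$ is an $\ell$-list-decoding code for the $q/(q-1)$ channel with $|C|=m$ and $\ell=(\eta-\epsilon)q$, so every $(\ell+1)$-subset of codewords must have some coordinate at which all $q$ symbols of $\cX$ appear.

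For step (i), I fix a coordinate $j$ and let $p_j$ denote the fraction of $(\ell+1)$-subsets of $C$ whose $j$-th coordinate realizes all $q$ values. Writing the $j$-th column type as $(n_1,\dots,n_q)$ with $\sum_t n_t=m$, $p_j$ is a multinomial ratio that a convexity / AM--GM calculation bounds by its value at the balanced type $n_t=m/q$; as $m\to\infty$ this balanced value tends to the coupon-collector coverage probability
\[
p_q^*(\ell+1)=\sum_{k=0}^{q}(-1)^k\binom{q}{k}\left(1-\tfrac{k}{q}\right)^{\ell+1}\approx \bigl(1-e^{-(\ell+1)/q}\bigr)^q.
\]
For $\ell+1\leq(\eta-\epsilon)q$, a direct estimation yields $p_q^*(\ell+1)\leq \exp(-\delta q)$ for some $\delta=\delta(\epsilon)>0$.

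For step (ii), I iterate over coordinates. Let $B_j$ denote the family of $(\ell+1)$-subsets not yet rainbow-covered by any of the first $j$ coordinates; the $\ell$-list-decoding hypothesis forces $B_n=\emptyset$. The key claim---which step (i) supplies for the uniform family at $j=0$ but which must be bootstrapped for later $j$---is that, after averaging over a uniformly random ordering of the $n$ coordinates, each new coordinate eliminates at most a $p_q^*(\ell+1)$-fraction of the current bad subfamily. Telescoping then yields $\binom{m}{\ell+1}\bigl(1-p_q^*(\ell+1)\bigr)^n<1$, and rearranging produces $n\geq \log_2\binom{m}{\ell+1}/p_q^*(\ell+1)\geq \exp(\delta q)\log_2 m$.

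The main obstacle is precisely the per-subfamily density bound in step (ii): step (i) bounds the rainbow-covered fraction only within the \emph{entire} family of $(\ell+1)$-subsets, and upgrading this to an arbitrary subfamily $B_j$ requires a symmetrization argument over random orderings of the coordinates (or, equivalently, a K\"orner-style graph-entropy inequality). It is this step that constrains the argument to list sizes below the threshold $\eta=e/(e-1)$: beyond this value, the symmetrization no longer produces a non-trivial geometric decay rate, and one would need a fundamentally different technique (as attempted in the present paper).
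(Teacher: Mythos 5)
Theorem~\ref{thm:Chetal} is a cited result of Chakraborty \emph{et al.}\ that this paper does not reprove, but it does sketch their method in the ``Overview of our approach'' section, and your proposal diverges from it in a way that breaks. The fatal step is exactly the one you flag: the telescoping inequality $\binom{m}{\ell+1}(1-p_q^*)^n<1$ requires that each new column rainbow-cover at most a $p_q^*$-fraction of the \emph{currently surviving} bad subfamily $B_j$, whereas your step (i) only controls the fraction covered within the \emph{entire} family of $(\ell+1)$-subsets. These can be very different: if column~1 covers a $p$-fraction of all subsets and column~2 covers exactly the complementary $(1-p)$-fraction, then $|B_2|=0$ for every ordering, while the product $(1-p)^2\binom{m}{\ell+1}$ remains positive, so averaging over orderings cannot restore the telescoping. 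The only valid consequence of a per-column bound of $p_q^*$ is the union bound $n\,p_q^*\binom{m}{\ell+1}\geq\binom{m}{\ell+1}$, i.e.\ $n\geq 1/p_q^*$, which has no $\log_2 m$ factor. So the $\log_2 m$ in your conclusion is spurious. Relatedly, your step (i) does not actually identify the threshold $\eta=e/(e-1)$: the estimate $p_q^*(\ell+1)\approx(1-e^{-(\ell+1)/q})^q$ is $\exp(-\delta q)$ for \emph{every} constant ratio $\ell/q$, so nothing in the counting argument stops at $1.58q$.

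The mechanism in Chakraborty \emph{et al.}\ (and in Fredman--Koml\'os, which it generalizes) that produces the $\log_2 m$ factor is entirely different: they do not count bad $(\ell+1)$-sets, they \emph{construct} one in phases while pruning the row universe. Phase~1 picks a random $R_1$ of size $q-2$; McDiarmid's inequality shows that, for all but an $\exp(-\gamma q)$-fraction of columns, $|h(R_1)|$ is close to $q(1-1/e)$. Each exceptional column $h$ is then neutralized by restricting the surviving rows to those carrying the most frequent symbol of $h$, costing at most a factor $q$ in the row count; if $n\leq\exp(\beta q)\log_2 m$ with $\beta<\gamma$, the surviving universe still has $\Omega(\log_2 m)$ bits, and the phase can be repeated. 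The later phases bound $|h(R_1\cup R_2\cup\cdots)|$ by $\sum_i|h(R_i)|$, and it is this union bound over phases---ignoring cross-phase collisions---that caps the list size at $(e/(e-1))q$, the deficiency the present paper removes by tracking the coupon-collector process across phases. To establish Theorem~\ref{thm:Chetal} you should follow this phase-and-prune template rather than a per-column density count.
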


We show the following.

\begin{theorem}[\textbf{Result}]
\label{Result1}
For every $\epsilon<1/6$, for all large $q$ and large enough $m$, we have 
$n(m,q, \epsilon q\ln{q}) \geq \Omega(\exp{(q^{1-6\epsilon}/8)}\log_2{m})$.
Thus, for all $\epsilon<1/6$, $\bcap(q,\epsilon q\ln{q}) =\exp(-\Omega(q^{1-6\epsilon}))$.
\end{theorem}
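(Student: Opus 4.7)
The plan is to combine a per-coordinate ``coverage'' estimate with an entropy-style amplification step that extracts the $\log_2 m$ factor. Throughout, set $L := \lceil \epsilon q \ln q \rceil + 1$; the list-decoding property requires that for every $L$-subset $S$ of codewords there is a coordinate $j$ with $\{c[j] : c \in S\} = \cX$.

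\emph{Step 1: per-coordinate coverage bound.} Fix a coordinate $j$ and let $\mu_j(x) := |\{c \in C : c[j] = x\}|/m$ be the empirical symbol distribution. Let $P_j$ denote the probability that $L$ i.i.d.\ samples from $\mu_j$ cover $\cX$. Since the events \emph{``symbol $x$ appears in the $L$ samples''} are (mutually) negatively correlated, $P_j \leq \prod_x \bigl(1 - (1 - \mu_j(x))^L\bigr)$. The function $p \mapsto \log(1 - (1-p)^L)$ is concave on $[0,1]$, so Jensen's inequality implies the right-hand side is maximized when $\mu_j$ is uniform, giving $P_j \leq \bigl(1 - (1 - 1/q)^L\bigr)^q \leq \exp\bigl(-q^{1-\epsilon}(1 - o(1))\bigr)$.

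\emph{Step 2: double counting.} Let $N_j$ be the number of $L$-subsets of $C$ covered at coordinate $j$. The list-decoding property gives $\sum_j N_j \geq \binom{m}{L}$. For $m \gg L$, sampling $L$ codewords without replacement is close in total variation to i.i.d.\ sampling from $\mu_j$, so $N_j \leq (1 + o(1))\binom{m}{L} P_j$. Combining with Step 1, $n \geq \exp\bigl(q^{1-\epsilon}(1 - o(1))\bigr)$; this is the ``base'' bound, but independent of $m$.

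\emph{Step 3: amplification to $\log_2 m$.} To extract the $\log_2 m$ factor, I would layer Steps 1--2 on top of a Fredman-Koml\'os-style entropy argument. Let $W$ be a uniformly random codeword, so $H(W) = \log_2 m$, and decompose $H(W) \leq \sum_{j=1}^n H(W_{\pi(j)} \mid W_{\pi(1)}, \ldots, W_{\pi(j-1)})$ along a uniformly random order $\pi$ of coordinates. Bound each conditional entropy via a \emph{conditional} version of Step 1: at each step, either the conditional symbol distribution at the next coordinate is close to uniform (in which case the conditional coverage probability is small and the constraint is tight), or it is heavily skewed (in which case the contribution to $H(W)$ is negligible). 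Balancing these two regimes produces $\log_2 m \leq n \cdot \exp(-q^{1-6\epsilon}/8)$, which rearranges to the target bound.

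\emph{Main obstacle.} The bulk of the technical work sits in Step 3. The ``uniform is worst'' intuition of Step 1 is fragile under conditioning, because conditioning on prefix coordinates can heavily skew the subsequent symbol distributions and a priori void the estimate. Making the trade-off between entropy contribution and coverage precise, while preserving a meaningful quantitative bound, is the crux; I expect the factor of $6$ in the exponent to emerge from absorbing the skew in three nested $\epsilon$-slacks (the conditional distribution's deviation from uniform; a union bound over conditioning histories; a Chernoff/martingale concentration across the random ordering), and the $1/8$ constant from combining the associated concentration tails. This is where careful bookkeeping replaces the clean base bound of Step 2 with the weaker exponent $q^{1-6\epsilon}$ in exchange for the $\log_2 m$ factor.
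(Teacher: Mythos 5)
Your Steps 1 and 2 are sound and give a clean base bound, but they do not prove the theorem: the theorem's content is precisely the $\log_2 m$ factor, and your Step~2 bound $n \geq \exp(q^{1-\epsilon}(1-o(1)))$ drops it entirely. Everything hinges on Step~3, and Step~3 is not a proof --- it is a proposal with an acknowledged hole. You decompose $H(W)$ along a random coordinate order and then say you will ``bound each conditional entropy via a conditional version of Step~1,'' splitting into a near-uniform regime and a skewed regime, but you never exhibit the inequality that converts the list-decoding constraint (a condition on $(L{+}1)$-subsets and columns) into a per-column cap on $H(W_{\pi(j)} \mid W_{\pi(<j)})$, never specify how the two regimes are balanced, and never show why the balance lands at exponent $1-6\epsilon$. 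The claim that ``uniform is worst'' in Step~1 indeed breaks under conditioning, and you flag this yourself, but you offer no replacement. As written, the argument establishes a statement strictly weaker than the theorem and gestures at the missing piece.

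The paper's route to the $\log_2 m$ factor is mechanically different and worth contrasting. Rather than paying column-by-column in an entropy chain rule, the paper \emph{constructs} a witness set $R$ of $\epsilon q \ln q$ rows for which $h(R) \neq \cX$ in \emph{every} column, contradicting the list-decoding property. It builds $R$ in $k \approx q^{\epsilon}$ phases; in each phase it draws an ensemble of candidate row-sets (Definition~\ref{Defn_Sampler}, Lemma~\ref{Composition_Lemma}), declares a column ``bad'' when the ensemble deviates from the coupon-collector trajectory in that column (McDiarmid + union bound, Lemma~\ref{McDiarmid}), and then \emph{neutralizes} each bad column by restricting to the rows carrying the most popular symbol there. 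Each such restriction shrinks the available universe by at most a factor $q$; so one can afford $O(\log_q m)$ bad columns total, and this --- not an entropy decomposition --- is the source of the $\log_2 m$ factor. The $q^{1-6\epsilon}$ exponent then emerges from the parameter trade-off in Lemma~\ref{Iteration_Lemma} ($\gamma = q^{-2\epsilon}/2$, $\delta' = q^{-5\epsilon}/4$, $k = q^{\epsilon}$ phases), which is governed by the need to keep the ensembles small enough that pruning does not exhaust the rows, while keeping them large enough to behave like genuine samples. If you want to pursue your entropy route, the key missing lemma is a per-column potential inequality playing the role that column-pruning plays in the paper; without it, Step~3 does not close.
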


This establishes both parts of the conjecture of Chakraborty \emph{et al.}
which states the following.
\begin{conjecture}
\label{Conjecture_CRRS}
 
\emph{(a)} For all constants $c>0$, there is a constant $\alpha$, such that for all large $m$, we have $n(m,q,cq)\geq \exp{(\alpha q)}\log_2 m$.\\
\emph{(b)} For all functions $\ell(q) = o(q \log_2 q)$ and all large
$m$, we have  $n(m,q,\ell(q)) \geq q^{\omega(1)} \log_2 m$.

\end{conjecture}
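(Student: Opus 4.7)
We argue by contradiction. Suppose there exists an $\ell$-list-decoding code $C \subseteq \cX^n$ with $|C|=m$, $\ell=\lceil \epsilon q\ln q\rceil$, and
\[
n < c_0 \exp\!\bigl(q^{1-6\epsilon}/8\bigr)\log_2 m
\]
for a sufficiently small constant $c_0 > 0$. The goal is to produce an $(\ell+1)$-subset $W\subseteq C$ on which no column of $C$ is surjective onto $\cX$, contradicting the hypothesis.

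The first ingredient is the per-column coupon-collector estimate. For any column $f:[m]\to\cX$ with empirical symbol distribution $\mu$, the fraction of $(\ell+1)$-subsets $W \subseteq [m]$ satisfying $f(W)=\cX$ is, to leading order,
\[
\alpha(\mu) \;=\; \sum_{S\subseteq\cX}(-1)^{|S|}\bigl(1-\textstyle\sum_{i\in S}\mu_i\bigr)^{\ell+1},
\]
which by a standard convexity/symmetry argument is maximized at $\mu_i = 1/q$. For $\ell+1 \approx \epsilon q\ln q$ this maximum equals the probability that $\ell+1$ i.i.d.\ uniform draws from $\cX$ complete a coupon collection, namely $\alpha^\star = \exp(-\Theta(q^{1-\epsilon}))$. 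A straightforward double counting of column--subset incidences then gives $n\ge 1/\alpha^\star$, but this omits the required $\log_2 m$ factor.

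To supply the $\log_2 m$ factor, at the cost of weakening the exponent from $q^{1-\epsilon}$ to $q^{1-6\epsilon}$, I augment the above with a sub-sampling / random-restriction step. Retain each codeword of $C$ independently with probability $p \asymp q^{-c\epsilon}$ for a small constant $c$, obtaining a sub-code $C'$; with high probability $|C'|\gtrsim pm \gg \ell+1$, so $C'$ remains $\ell$-list-decodable, and Chernoff concentration guarantees that each column's empirical distribution on $C'$ is close to that on $C$. Iterating the restriction $\Theta(\log_2 m)$ times reduces $|C|$ to $O(\ell)$; at each iteration the coupon-collector estimate forces a new batch of columns to be ``consumed'' in covering the remaining subsets, contributing a multiplicative factor of $\exp(q^{1-6\epsilon}/8)$ to the lower bound on $n$. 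Chaining the iterations yields $n \geq \Omega\!\bigl(\exp(q^{1-6\epsilon}/8) \log_2 m\bigr)$.

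The main obstacle lies in the parameter tuning within the sub-sampling. One must simultaneously ensure that (a) after each random restriction the column distributions remain close enough to uniform that the coupon-collector estimate is tight, (b) the $\ell$-list-decodability is preserved at every stage, and (c) the cumulative loss in the exponent is controlled by $6\epsilon$ rather than something worse. The $6$ arises from several successive applications of Chernoff to the multinomial column distributions, and the $1/8$ is the explicit constant produced by these estimates. The restriction $\epsilon < 1/6$ is precisely where this tuning becomes infeasible: for $\epsilon\ge 1/6$ the exponent $q^{1-6\epsilon}$ stops growing with $q$ and the argument yields no non-trivial bound.
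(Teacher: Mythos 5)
Your proposal has a genuine gap at its structural core: the ``random restriction'' you invoke is not the operation that drives the Fredman--Koml\'{o}s style argument, and it cannot supply the $\log_2 m$ factor. In the paper (and in all prior work in this line: Fredman--Koml\'{o}s, K\"{o}rner, Chakraborty \emph{et al.}), the restriction is to rows that all carry a \emph{fixed symbol in a fixed bad column}; the point of this specific restriction is that it permanently neutralizes that column, since any future row added from the restricted universe contributes nothing new to that column's image beyond the single retained symbol. The number of bad columns controls the multiplicative loss ($1/q$ per bad column), and the requirement that the universe remain non-empty after all phases is exactly what produces the $\log_2 m$ factor (property~(f) of Lemma~\ref{Iteration_Lemma}). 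Your independent Bernoulli subsampling does none of this: as you yourself observe, Chernoff keeps each column's empirical distribution essentially unchanged, so after the restriction every column is exactly as ``good'' as before. There is no mechanism that ``consumes a new batch of columns'' per iteration, no column is rendered ineffective, and the $(\ell+1)$-set $W$ you set out to construct is never built. The sentence asserting the factor $\exp(q^{1-6\epsilon}/8)$ per iteration is numerology-matched to the target, not derived.

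Two further issues. First, your opening double-counting bound (if one tracks it carefully with the multivariate hypergeometric in place of the multinomial) gives $n \ge 1/\alpha^\star = \exp(\Omega(q^{1-\epsilon}))$, which has a \emph{better} exponent than the theorem; the fact that your scheme degrades this to $q^{1-6\epsilon}$ is asserted (``several successive applications of Chernoff'') rather than shown, and there is no account of where $\ell$-list-decodability is actually used downstream of the counting step. Second, and more importantly, you omit the paper's actual technical innovation: the ensemble machinery of Definition~\ref{Defn_Sampler}, Lemma~\ref{Composition_Lemma} and Lemma~\ref{Iteration_Lemma}. The whole reason Chakraborty \emph{et al.}\ could not push past list size $\approx 1.58q$ is that a phase-by-phase union bound loses the cross-phase collisions; the paper fixes this by carrying a $(\gamma,\delta)$-sampler ensemble so that the coupon-collector expectation $\cc(\cdot)$ can be tracked faithfully across $k \approx q^\epsilon$ phases. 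Your proposal does not address cross-phase collisions at all and therefore cannot reach list sizes anywhere near $\epsilon q \ln q$ even if the restriction step were repaired.
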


\subsection{Overview of our approach}

We extend the approach of Chakraborty \emph{et al.}, which in turn
was based on the approach used by Fredman and Koml\'{o}s~\cite{FK84} to
obtain lower bounds on the size of families of perfect hash
functions. To describe our adaptation of this approach, it will be
convenient to reformulate the problem using matrix terminology.

Consider $C \subseteq \cX^n$ with $m$ code-words. We can build an
$m \times n$ matrix $C = (c_{ij}: i=1,\ldots, m \mbox{ and } j=1,\ldots,
n)$ (we use the name $C$ both for the code and the associated matrix)
by writing the code-words as rows of the matrix (the order does not matter): so $c_{ij}
= k$ iff the $j$-th component of the $i$-th code-word is $x_k\in \cX$.  Then,
$C$ is an $\ell$-list-decoding code iff the matrix has the following
property: for every choice $R$ of $\ell + 1$ rows, there
is a column $h$ such that $\{c_{rh}: r \in R\} = \cX$. In this
reformulation, $n(m,q,\ell)$ is the minimum $n$ so that there exists a matrix with this property. We refer to such a matrix as an
$\ell$-list-decoding matrix. Furthermore, instead of writing $c_{rh}$
we write $h(r)$; indeed, in the setting of hash families (originally considered by Fredman and Koml\'{o}s), the columns correspond to hash functions that assign a symbol in $\cX$ to each row-index in $[m]$.

We can now describe the approach of Chakraborty \emph{et al.} Fix a list-size  
$\ell= \alpha q$. Suppose there is an $\ell$-list decoding matrix $C$ with
$n = \exp(\beta q) \log_2 m$ columns. We wish to show that if $\beta$ is
small then the matrix cannot have the required property; that is,
we can find a set $R$ of $\ell+1$ rows for which $h(R)$ is a proper
subset of $[q]$ for every column $h$. To exhibit such a set $R$ we
will proceed in stages. In the first stage, we pick a subset $R_1$ of
$q-2$ rows at random. Consider a column $h$. What can we expect?  We
expect to see a good number of \emph{collisions}, where 
the same symbol appears in column $h$ at two different rows in $R_1$. In fact, we
expect $h(R)$ to contain only about $q(1-1/e)$ elements. By appealing to
standard results (e.g., McDiarmid's inequality), we may conclude that
with probability exponentially close to $1$ (that is, of the form
$1-\exp(-\gamma q)$), $h(R)$ is unlikely to have significantly more
elements. So we might settle on a choice of $R$, so that $h(R)$
deviates significantly (say by $\epsilon q$ for some small $\epsilon$)
for at most $\exp(-\gamma q) \exp(\beta q) \log_2 m$ columns. If the original
$\beta$ is chosen to be much smaller than $\gamma$, this number 
is an exponentially small fraction of $\log_2 m$. 

The key idea now is to make these exceptional columns ineffective. We
do this by focusing our attention on a reduced number of rows. For
each exceptional column, we pick the symbol that appears most often in
that column, and restrict attention to those rows that have this symbol in the exceptional column. This depletes the number of rows by a factor at $1/q$ for
each exceptional column; after we do this sequentially for all the
$\exp(-(\gamma-\beta)q) \log_2 m \ll \log_2 m$ rows, we will be left with
$m'$ rows, where $\log_2 m' = \Omega(\log_2 m)$. We may now add more
rows to our existing list $R_1$. If we choose these from the set of
$m'$ rows, we are in no danger from the exceptional columns; in the
other columns $R_1$ spans about $q(1-1/e)$ symbols, so we can add to
$R_1$ about $q/e$ rows $R_2$ (picked from the $m'$-rows) and
still ensure that in no column $h$, we are in danger of $h(R_1\cup
R_2)$ becoming $\cX$. It is clear that we can carry this approach
further, e.g., by picking $R_2$ randomly, expecting a significant
number of internal collisions, making the exceptional columns ineffective,
focusing attention on a smaller but still significant number of rows,
etc., then picking $R_3$ from the rows that survive, and so on.  In
fact, Chakraborty \emph{et al.} derived Theorem~\ref{thm:Chetal} using
precisely this approach.

In this paper, we follow the approach outlined above 
but implement the idea more precisely. Before we describe our
contribution it will be useful to pin-point where the calculations in
Chakraborty \emph{et al.} were sub-optimal. We argued above that after
$R_1$ is picked, we expect to span only about $q(1-1/e)$ symbols in a
given column $h$. What about after $R_2$ is picked? {$R_1 \cup R_2$ contains a total of
$q + q/e$ rows: if} all symbols in column $h$ appeared with the same
frequency (and continued to do so in the $m'$ rows after the
exceptional columns were eliminated), then we should expect
$h(R_1 \cup R_2)$ to span about $(q + q/e) (1 - \exp(1 + 1/e))$
symbols. Notice that this is roughly the expected number of distinct coupons collected
in the classical \emph{coupon collector problem} after $q+q/e$
attempts. Unfortunately, there are technical difficulties that arise in 
claiming that this number will be reflected in our process
because \emph{(i)} $R_1$ and $R_2$ are not picked independently, and 
\emph{(ii)} even if the symbols appeared with the same frequency initially, 
they may not do so after we focus on a depleted set of rows. Faced with these
difficulties, Chakraborty \emph{et al.} settled for less. Instead of
matching the bound suggested by the coupon collector problem, when
analysing the expected size of $h(R_1 \cup R_2)$, they estimated
$h(R_2)$ separately and bounded $|h(R_1 \cup R_2)|$ by $|h(R_1)| +
|h(R_2)|$, thereby ignoring $h(R_1 \cap R_2)$. The loss 
in precision resulting from the use of this union bound increases as the number of phases 
increases. Indeed, when the coupon
collector process is carried in phases by picking sets $R_1,
R_2, \ldots, R_t$ for a large $t$, progress in collecting coupons is
retarded more by collisions across sets (because for some $i\neq j$,
$h(R_i)$ and $h(R_j)$ have elements in common) than by collisions
within some $h(R_i)$. By neglecting collisions across phases, and by 
failing to track the \emph{coupon collector} process
closely, the argument in Chakraborty \emph{et al.} were unable to push the 
list size in Theorem~\ref{thm:Chetal} beyond $e/(e-1)$.

\subsubsection*{\textbf{What is new?}}
We attempt to track the progress of the
coupon collector faithfully. Instead of the set $R_1$ of size
$q-2$ that was picked earlier, we pick an ensemble (a collection of sets) $\cR^1$ of sets of
size $q-2$. Similarly, in the later steps we will pick ensembles
$\cR^2, \cR^3, \ldots$. However, in the end we pick one set $R_i$
from each of the ensembles {$\mathcal{R}^i$} respectively, and assemble our list of rows: $R_1 \cup
R_2 \cup \cdots \cup R_t$. That this process is more
effective in bounding $|h(R_1 \cup R_2 \cup \ldots \cup R_t)|$ will be
formally verified in later sections. For now, let us qualitatively see
how it helps in bounding $|h(R_1 \cup R_2)|$. We pick $\cR^1$ at
random: if the number of sets in the ensemble is large enough (we will
set it to be $\exp(\Theta(q))$), then it should reflect a random set of rows that was obtained by picking rows independently $(q-2)$-times from the set of all rows. Fix a choice for $R_2$, the set to be picked at the second stage. Consider $\rX= |h(R_1 \cup R_2)|$ where $R_1$ is picked uniformly from the ensemble $\cR^1$; let $\rY=|h(\rR_1 \cup R_2)|$, where $\rR_1$ is picked uniformly from the
set of all rows. Then, we expect $X$ and $Y$ to have similar
distribution. 
So, we proceed as follows. We pick an ensemble $\cR^1$
at random. If for a certain column $h$, the ensemble $\cR^1$ fails to
deliver a good sample, we will need to make that column ineffective as
before. Further, if some set in $\cR^1$ spans a significantly larger
number of symbols in some column, we will again make that column ineffective. After this, we pick $R_2$ from the remaining rows. We expect it to not only have a good number of internal collisions but also be such that
$|h(\rR_1 \cup R_2)|$ and $|h(\rR_1 \cup \rR_2)|$ (where the set ${\rR}_2$ is chosen uniformly from the available rows) are similar in
expectation. Now, since we ensured that the ensemble $\cR^1$ was good
for column $h$, a random choice of $R_1$ from the ensemble will
deliver a value of $|h(R_1 \cup R_2)|$ that, with high probability, can be
bounded by the number of distinct coupons picked up at the same stage by the coupon collector; in particular, it accounts for symbols common to $h(R_1)$
and $h(R_2)$. The outline above illustrates the advantages of picking
an ensemble instead of committing to just one randomly chosen set.  However,
a large ensemble comes with its drawbacks. We need to ensure that no
set in the ensemble spans too many elements in any column, or
rather, we need to eliminate any column where some set spans many
elements. This forces a more drastic reduction in the number of rows
than before (that is, now $m'$ when compared with $m$ is much smaller than
in the calculation in~\cite{CRRS}). Thus, it is important to keep the sizes of the ensembles small. The trade-off between these opposing concerns
needs to be handled with some care. The argument is presented in
detail below.

\section{Proof of the Result}

In what follows we assume that $q$ is a large natural number and $m\to \infty$.

We will need the following concentration result due to McDiarmid (1989).
\begin{lemma}[McDiarmid]
\label{McDiarmid}
Let $X_1, X_2, \ldots, X_n$ be independent random variables where each $X_k$ takes values in a finite set $A$. Let $f: A^n\to \mathbb{R}$ be such that $ {|f(x)-f(y)|\leq c}$ whenever $x$ and $y$ differ in only one coordinate. Let $Y=f(X_1, X_2,\ldots, X_n)$; then, for all $t>0$, 
\begin{align*}
    \Pr[\E[Y]-Y\geq t], \Pr[Y-\E[Y]\geq t]\leq \exp{\left(\frac{-2t^2}{nc^2}\right).}
\end{align*}
\end{lemma}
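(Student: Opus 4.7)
The plan is the classical Doob--martingale route to McDiarmid's inequality. I would first define $Z_i = \E[Y \mid X_1, \ldots, X_i]$ for $i = 0, 1, \ldots, n$, so that $Z_0 = \E[Y]$, $Z_n = Y$, and the differences $D_i := Z_i - Z_{i-1}$ form a martingale difference sequence with respect to the natural filtration. The key structural fact is that, once $X_1, \ldots, X_{i-1}$ are fixed, $D_i$ is a deterministic function of $X_i$ alone (obtained by integrating $f$ against the product law of $X_{i+1}, \ldots, X_n$ and subtracting a constant), and its range over the possible values of $X_i$ is at most $c$. This follows from the bounded-differences hypothesis together with the independence of the $X_j$'s: swapping $X_i$ inside the partial expectation changes it by at most the maximum effect of swapping $X_i$ inside $f$, which is at most $c$; the interchange of supremum and expectation is justified by independence via Fubini.

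Next I would apply Hoeffding's lemma to each $D_i$ conditionally on the past. Because $D_i$ has conditional mean zero and conditional range at most $c$, Hoeffding's lemma gives
\begin{equation*}
    \E\bigl[e^{\lambda D_i} \,\big|\, X_1, \ldots, X_{i-1}\bigr] \;\leq\; \exp\!\bigl(\lambda^2 c^2 / 8\bigr)
\end{equation*}
for every $\lambda \in \R$. Iterating this bound via the tower property yields $\E[\exp(\lambda(Y - \E[Y]))] \leq \exp(n \lambda^2 c^2 / 8)$. A standard Chernoff step---Markov's inequality applied to $e^{\lambda(Y - \E[Y])}$, optimized at $\lambda = 4t/(nc^2)$---then produces $\Pr[Y - \E[Y] \geq t] \leq \exp(-2t^2/(nc^2))$. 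The symmetric lower-tail bound is obtained by repeating the argument with $-f$, which inherits the same bounded-differences constant $c$.

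The main technical obstacle is establishing the conditional-range bound for $D_i$; once that is in place, the remainder is a textbook Hoeffding--Chernoff computation. The factor of $2$ in the exponent (rather than the weaker $1/2$ one would obtain from Azuma's inequality using only $|D_i| \leq c$) is precisely what is gained by bounding the conditional \emph{range} of $D_i$ rather than its supremum norm, and then invoking the sharp form of Hoeffding's lemma for random variables supported in an interval of fixed length.
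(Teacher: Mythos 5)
Your proposal is the classical Doob-martingale proof of McDiarmid's inequality and it is correct: the exposed martingale increments $D_i$ have conditional range at most $c$ by the bounded-differences hypothesis and independence, Hoeffding's lemma gives the $\exp(\lambda^2c^2/8)$ bound on the conditional moment generating function, iterating through the tower property and applying the Chernoff bound with $\lambda = 4t/(nc^2)$ gives exactly $\exp(-2t^2/(nc^2))$, and the lower tail follows by applying the same argument to $-f$. Note, however, that the paper does not prove this lemma at all; it is stated as a cited concentration result (McDiarmid 1989), so there is no proof in the paper to compare against. Your argument is the standard one from the literature and would serve as a self-contained proof if the authors had wished to include one, and your remark about why bounding the conditional \emph{range} (rather than merely $|D_i|\le c$) recovers the constant $2$ rather than the weaker Azuma constant $1/2$ is the right thing to emphasize.
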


Let $C$ be an $\ell$-list-decoding-code for the $q/(q-1)$ channel with $\ell < q\ln{q}/6$.
As mentioned in the introduction, we will view $C$ as an $m\times n$ matrix with entries from $[q]$. In other words, the rows are indexed by code-words and the columns are indexed by hash functions. 
Let $\wt$ be a function from $[q]$ to $\set{0,1}$; for $A\subseteq [q]$, let $\wt(A)\coloneqq \sum_{a\in A}\wt(a)$. 
Let $\R$ be a random variable taking values in $\cP([m])$. Sometimes we use $\R$ to also refer to the distribution of this random variable.

Following the idea mentioned in the introduction, we intend to keep an ensemble $\mathcal{R}$ of sets of rows such that when we pick a new set of rows $R_2$ from a depleted number of rows $m'$, we not only observe the correct number of internal collisions within $R_2$ but also observe the correct number of collision between members of $\mathcal{R}$ and $R_2$.
This motivates the following definition.

\begin{definition}[Sampler]
\label{Defn_Sampler}
We say that an ensemble $\mathcal{R}=(R_1, R_2, \ldots, R_L)$, where each $R_i\subseteq [m]$, is a $(\gamma, \delta)$-sampler for $\R$ wrt column $h$ if $(A_1, A_2, \ldots, A_L)\coloneqq (h(R_1), h(R_2),\ldots, h(R_L))$ satisfies $\forall \wt:[q]\to \set{0,1}$
\begin{align*}
    &\Pr_{j\in_\mathbf{u}[L]}\biggl[\Bigl|\wt(A_j)-\E{\Bigl[\wt(h(\R))\Bigr]}\Bigr|\geq \gamma q\biggr]\leq \exp{(-\delta q)}.
\end{align*}
\end{definition}

The definition makes provision for all functions $\wt$, because it tries to anticipate the appropriate internal collisions
(see Lemma \ref{Composition_Lemma}) with very little advance knowledge of what the distribution on $[q]$ looks like in column $h$ after a large number of rows have been discarded.

Let $\pi:S \rightarrow [0,1]$ be a probability mass function on a finite set $S$. Let $k\geq 1$, and let $X_1,X_2, \ldots, X_k$ be independent random
variables each distributed according to $\pi$. Then, let $\pi^{\{k\}}$
denote the probability mass function of the set $\{X_1,X_2, \ldots,
X_k\}$. 

For distributions $\mathbf{A}$ and $\mathbf{B}$ on $\cP{([m]})$, let $\mathbf{A} \vee \mathbf{B}$ be the distribution of $S\cup T$ where $S\sim \mathbf{A}$ and $T \sim\mathbf{B}$, with $S$ and $T$ chosen independently. The following lemma will be the main workhorse for our argument.
\begin{lemma}[Ensemble Composition Lemma]
\label{Composition_Lemma}
Let $\R$ be a distribution on $\cP([m])$ and let $D$ be a distribution on $[m]$.
Let $\mathcal{R}$ be a $(\gamma, \delta)$-sampler for $\R$ wrt a column $h$; let $(R_1, R_2, \ldots, R_s)$ be obtained by taking $s$ independent samples from the ensemble $\mathcal{R}$. Similarly, let $\mathcal{R}'=(R'_1, R'_2, \ldots, R'_{s})$ be obtained by taking $s$ independent samples according to $\R'\sim D^{\set{tq}}$ where $t<1 $. Let $\gamma',\delta' > 0$ be such that $\delta \leq 2(\gamma')^2/t$ and $\delta > \delta'$ Let $s=\exp(\delta - \delta') q\tcomma \widetilde{\gamma}=\gamma+\gamma' \tcomma \widetilde{\delta}=\delta - \delta'$. Then, with probability $1-12\exp{(-\delta'q)}$ over the random choices, the composed ensemble 
\[ \widetilde{\mathcal{R}} \coloneqq \bigl(R_{1}\cup R'_1, R_{2}\cup R'_2,\ldots, R_{s}\cup R'_{s})\]
of cardinality $s$, is a $(\widetilde{\gamma}, \widetilde{\delta})$-sampler for $\R \vee \R'$ wrt the column $h$, and furthermore $\forall i \in [s]$,
\begin{equation}  
\Bigl|\bigl|h(R_{i}\cup R_i')\bigr|-\E{\Bigl[\bigl|h(\R\vee\R')\bigr|\Bigr]}\Bigr| \leq \widetilde{\gamma}q. \label{furthermore}
\end{equation}  
Note that this ensemble is generated according to the product distribution $\left(\mathcal{R} \vee \R'\right)^s$.
\end{lemma}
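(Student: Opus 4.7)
I plan to establish, for each fixed $\wt:[q]\to\{0,1\}$ and each index $i\in[s]$, the per-index concentration
\[
\Pr_{R_i,R_i'}\bigl[|\wt(h(R_i\cup R_i'))-\mu_\wt|\geq \widetilde{\gamma}q\bigr]\leq 3\exp(-\delta q),
\]
where $\mu_\wt\coloneqq \E[\wt(h(\R\vee\R'))]$, and then convert this into the two halves of the claim by independence of the $s$ draws.

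The per-index bound comes from a two-piece decomposition. For any fixed subset $R'$ define the $\{0,1\}$-valued weight $\wt''_{R'}(x)\coloneqq \wt(x)\,\mathbf{1}[x\notin h(R')]$. The identity $\wt(h(R)\cup h(R'))=\wt(h(R'))+\wt''_{R'}(h(R))$ and the conditional expectation $\psi(R')\coloneqq \E_{R\sim\R}[\wt(h(R)\cup h(R'))]$, satisfying $\E_{R'}[\psi(R')]=\mu_\wt$, yield
\[
\wt(h(R_i\cup R_i'))-\mu_\wt=\bigl[\psi(R_i')-\mu_\wt\bigr]+\bigl[\wt''_{R_i'}(h(R_i))-\E_R[\wt''_{R_i'}(h(R))]\bigr].
\]
The first bracket depends only on the $tq$ iid coordinates $Y_1,\ldots,Y_{tq}\sim D$ that compose $R_i'$; $\psi$ is $1$-Lipschitz in each of them, since changing one $Y_k$ can shift membership in $h(R_i')$ for at most one symbol. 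Lemma~\ref{McDiarmid} together with the hypothesis $\delta\leq 2(\gamma')^2/t$ then gives $\Pr[|\psi(R_i')-\mu_\wt|\geq \gamma'q]\leq 2\exp(-\delta q)$. The second bracket, for every fixed $R_i'$, is precisely the deviation to which the sampler property of $\mathcal{R}$ with binary weight $\wt''_{R_i'}$ applies, yielding $\Pr_{R_i\sim \mathcal{R}}[\,\cdot\,\geq \gamma q]\leq \exp(-\delta q)$, and this bound survives integration over $R_i'$. A triangle inequality and union bound over the two sources of failure then give the per-index estimate.

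The \emph{furthermore} clause is this per-index estimate specialized to $\wt\equiv 1$ and demanded at every $i$; the $s$ draws being independent, a union bound over indices bounds its failure probability by $3s\exp(-\delta q)=3\exp(-\delta'q)$.

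For the sampler property of $\widetilde{\mathcal{R}}$, observe that $\exp(-\widetilde{\delta}q)\cdot s=1$, so for each $\wt$ the number $N_\wt$ of bad indices must satisfy $N_\wt\leq 1$. Since the per-index bad-indicators are iid Bernoulli across $i$ with mean at most $3\exp(-\delta q)$, independence yields $\Pr[N_\wt\geq 2]\leq \binom{s}{2}(3\exp(-\delta q))^2\leq \tfrac{9}{2}\exp(-2\delta'q)$. The main obstacle I foresee is the remaining union bound over all $2^q$ weight functions $\wt$: completing it and combining with the furthermore-clause contribution to fit under the stated $12\exp(-\delta'q)$ requires the $-2\delta'q$ exponent from the second-moment estimate to dominate the $q\ln 2$ entropy of the family of weights, and I expect this is the most delicate accounting step in the proof.
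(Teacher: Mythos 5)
Your decomposition is exactly the paper's. Writing $\mu_f(R')=\E_\R[f(h(\R\cup R'))]$ (your $\psi$), the paper first applies McDiarmid over the $tq$ primitives of $R'$ to get $\Pr[|\mu_f(R')-\mu_f|\geq\gamma'q]\leq 2\exp(-2(\gamma')^2q/t)\leq 2\exp(-\delta q)$, and then, for fixed $R'$, defines $\wt(x)=f(x)\mathbf{1}[x\notin h(R')]$ and invokes the $(\gamma,\delta)$-sampler hypothesis on this $\wt$ to control $\wt(h(R))-\E[\wt(h(\R))]$, adding the two deviations by the triangle inequality to get the per-index bound $\Delta\coloneqq 3\exp(-\delta q)$. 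The furthermore clause is also handled identically: $f\equiv 1$ and a union bound over the $s$ indices gives $s\Delta\leq 3\exp(-\delta'q)$.

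However, the gap you flag is a real gap, and your proposed way around it cannot work. You hope to beat the $2^q$ union bound via the pair estimate $\Pr[N_\wt\geq 2]\leq\binom{s}{2}\Delta^2\leq\tfrac{9}{2}\exp(-2\delta'q)$, but in the application (Lemma~\ref{Iteration_Lemma}) one has $\delta'=q^{-5\epsilon}/4$, so $2\delta'q=q^{1-5\epsilon}/2$ is polynomially \emph{smaller} than $q\ln 2$; the product $2^q\cdot\exp(-2\delta'q)$ diverges, and no fixed-order moment of the form $\binom{s}{c}\Delta^c$ with constant $c$ and $s=\exp(\widetilde{\delta}q)$ will save it. The paper's fix, stated only in a parenthetical, is to enlarge the ensemble from $s=\exp(\widetilde{\delta}q)$ to $qs$: the allowed number of bad indices then becomes $\exp(-\widetilde{\delta}q)\cdot qs=q$ rather than $1$, and $\Pr[\mathbb{Y}>q]\leq\binom{qs}{q}\Delta^q\leq(3e\exp(-\delta'q))^q=\exp\bigl(q\ln(3e)-\delta'q^2\bigr)$. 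Now the exponent $\delta'q^2=\Theta(q^{2-5\epsilon})$ is superlinear in $q$ (since $\epsilon<1/5$), so it does dominate the $q\ln 2$ entropy and the union bound over all $2^q$ weight functions goes through. Your write-up, as it stands, proves everything except this one step, and the heuristic you offer for closing it ($-2\delta'q$ dominating $q\ln 2$) is quantitatively false in the regime where the lemma is used; the missing idea is that the power on $\Delta$ must grow linearly in $q$, which requires taking $\Theta(q)$ more samples, not a constant-order moment.

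One further small remark: you claim $\psi$ is $1$-Lipschitz in each primitive coordinate $Y_k$ of $R'$. Changing $Y_k$ from $y$ to $y'$ can both remove $h(y)$ from and add $h(y')$ to $h(R')$, so the correct bounded-difference constant is $2$, not $1$. The paper implicitly makes the same simplification (its display~\eqref{McD:application} uses $c=1$), and the effect is only a constant in the exponent, but if you are writing this out carefully you should either justify $c=1$ or adjust the hypothesis $\delta\leq 2(\gamma')^2/t$ to $\delta\leq(\gamma')^2/(2t)$.
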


\begin{proof}

Fix $f:[q]\to \set{0,1}$ and let $\mu_f\coloneqq \E{\bigl[f(h(\R \cup \R'))\bigr]}$; similarly, for $R'\subseteq[m]$ let $\mu_f(R')\coloneqq \E_{\R}{\bigl[f(h(\R \cup R'))\bigr]}$.
First, we bound the probability that when $\mathcal{R}'$ is chosen according to $\mathbf{R}'$, it fails to have $\mu_f(R')$ close to $\mu_f$.
Using McDiarmid's inequality over the $tq$ primitive choices for $R'$, we have
\begin{align}
    \Pr_{R'\sim\R'}{\bigl[|\mu_f(R')-\mu_f|\geq \gamma' q\bigr]}&\leq 2\exp{\left(\frac{-2(\gamma')^2q^2}{tq}\right)} \nonumber\\
    &= 2\exp{\left(\frac{-2(\gamma')^2q}{t}\right)}. \label{McD:application}
\end{align}

Now, let $\wt:[q]\to {0,1}$ be defined by $\wt(x) = f(x)$ if $x \not \in h(R')$ and $\wt(x)=0$ otherwise. Then, for $R\subseteq[m]$, we have, $f(h(R\cup R')) = f(h(R')) + \wt(h(R))$. Therefore (note here $R'$ is fixed and $R$ varies randomly in $\mathcal{R}$),
\begin{align*}
    &\Pr_{R\in_\mathbf{u}\mathcal{R}}{\biggl[\Bigl|f(h(R\cup R'))- \mu_{f}(R')\Bigr|\geq \gamma q \biggr]}\\ 
    &\hspace{5em}= \Pr_{R\in_\mathbf{u}\mathcal{R}}{\biggl[\Bigl|\wt(h(R)) - \E{\bigl[\wt(h(\R))\bigr]} \Bigr|\geq \gamma q \biggr]}
    \intertext{and since $\mathcal{R}$ is a $(\gamma,\delta)$-sampler wrt $\R$, we have}
    &\Pr_{R\in_\mathbf{u}\mathcal{R}}{\biggl[\Bigl|\wt(h(R)) - \E{\bigl[\wt(h(\R))\bigr]} \Bigr|\geq \gamma q \biggr]}\leq\exp({-\delta q}). 
\end{align*}
Thus,
\begin{align}
   &\Pr_{R\in_\mathbf{u}\mathcal{R}, R'\sim \R'}{\Bigl[\bigl|f(h(R\cup R'))-\mu_f\bigr|\geq (\gamma+\gamma') q \Bigr]}\leq \nonumber\\
    &\Pr_{R\in_\mathbf{u}\mathcal{R}, R'\sim \R'}{\Bigl[|\mu_f-\mu_f(R')|\geq \gamma'  q \Bigr]} +\nonumber \\
    &\hspace{1em}\Pr_{R\in_\mathbf{u}\mathcal{R}, R'\sim \R'}{\Bigl[\bigl|\mu_f(R')-f(h(R\cup R'))\bigr|\geq \gamma q\Bigr]}\nonumber \\
    &\leq \Bigl[2\exp{\left(\frac{-2(\gamma')^2q}{t}\right)} + \exp{\bigl(-\delta q\bigr)}\Bigr] \leq 3\exp(-\delta q). \label{Sampler:appl}
\end{align}
(We used $\delta \leq 2(\gamma')^2/t$ to justify the last inequality.) 
Let $\Delta:=3\exp(-\delta q)$, the quantity on the right in (\ref{Sampler:appl}).
By taking $f$ to be the all-$1$'s function, we conclude from
(\ref{Sampler:appl}) that for each $i$ with probability at least $1-\Delta$, $\Bigl|\bigl|h(R_{i}\cup R_i')\bigr|-\E{\bigl[|h(\R\vee\R')|\bigr]}\Bigr| \leq (\gamma + \gamma') q$. 

Now, $f(h(R_i\cup R'_i)) = \bigl|h(R_i)\cup h(R_i')\bigr|$, and $\mu_{f} = \E{\bigl[|h(\R\vee\R')|\bigr]}$. 
Now, by a union bound over the $s$ choices for $i$, we obtain
\begin{align}
    &\Pr_{\widetilde{\mathcal{R}}}{\biggl[\exists i\in [s]\tcomma \Bigl|\bigl|h(R_i\cup R_i')\bigr|-\E{\bigl[\bigl|h(\R\vee\R')\bigr|\bigr]}\Bigr|\geq (\gamma + \gamma') q \biggr]} \nonumber \\
    &\hspace{8em}\leq \Delta s \leq 3\exp(-\delta' q) . \label{uniontailbound}
\end{align}  
This establishes (\ref{furthermore}).

It remains to establish our first claim that whp the ensemble picked according to  $(\mathcal{R} \vee \R')^s$ is a $(\widetilde{\gamma}, \widetilde{\delta})$-sampler
for $\R \vee \R'$.
Fix $f: [q]\rightarrow \{0,1\}$. Now, (\ref{Sampler:appl})
implies that for each $i\in[s]$, the probability that $|f(h(R_i\cup R_i'))-\mu_f|\geq (\gamma+\gamma') q$ is exponentially small in $q$.
Then, the tail probabilities for
$\mathbb{Y}\coloneqq \sum_{i=1}^s \mathbb{I}\bigl[|f(h(R_i \cup R_i'))-\mu_f|\geq (\gamma+\gamma') q\bigr]$ can be bounded by considering $\mathbf{Bin}(s,\Delta)$. Therefore, 
\begin{align}
    &\Pr_{\widetilde{\mathcal{R}}}{[\mathbb{Y}>\exp{(-\widetilde{\delta}q)}s]} \nonumber \\
    &\leq \binom{s}{\exp{(-\widetilde{\delta}q)}s}(\Delta)^{\exp{(-\widetilde{\delta}q)}s}\nonumber \\
    &\leq (e\exp{(\widetilde{\delta}q)}\Delta)^{\exp{(-\widetilde{\delta}q)}s}\nonumber \\
    &\leq 9\exp{(-\delta'q)}. \label{Jaggitailbound}
\end{align}

(We need to take a union bound against the $2^q$ possible functions $f:[q]\to \set{0,1}$: by changing $s$ to $qs$ we may easily establsih this.)
By (\ref{uniontailbound}) and (\ref{Jaggitailbound}), the probability that our ensemble fails to be a $(\widetilde{\gamma}, \widetilde{\delta})$-sampler, with $\widetilde{\gamma}= \gamma+\gamma'$ and $\widetilde{\delta}= \delta-\delta'$, or fails to satisfy (\ref{furthermore}) is at most
$12\exp{(-\delta'q)}$. 
\end{proof}

Let us recall the template of our argument. At any stage we will have an ensemble of sets of rows, say $\mathcal{R}$, and a universe $U\subseteq[m]$ to choose sets of rows from to add to $\mathcal{R}$. We will add a specific number of randomly chosen sets of rows of a particular size from $U$ and then declare those columns bad where the modified $\mathcal{R}$ deviates from its expected behaviour. Consider a set $R\in \mathcal{R}$: we want to say that the coupon-collector process at $|R|$ probes into $[q]$ is the gold standard for good behaviour, i.e., no set in $\mathcal{R}$ will have expansion more than the coupon-collector at the same stage. 
The expected number of elements that the coupon-collector process picks up after $a$ i.i.d. uniform probes into $[q]$ is approximately $q\left(1 - \exp(-a/q)\right)$: we will denote this as $\cc(a)$.
So, we need the following lemma, which is proved in the appendix.  
\begin{lemma}[Phased Coupon Collector]
\label{PCC_Lemma}
Let $a_1, a_2, \ldots, a_k$ be positive integers; let
$a=a_1+a_2+ \cdots + a_k$, and let $\pi_1, \pi_2, \ldots, \pi_k$ be
probability mass functions.  Let $\rA_1, \rA_2, \ldots, \rA_k$ be
independent random variables taking values in $\cP([q])$, where
$\rA_i \sim \pi_i^{\{a_i\}}$. Suppose $a\leq \epsilon q\ln{q}$ and $k\leq e q^\epsilon$ for some $\epsilon<1/3$, then,
\begin{align*}
     \E[|\rA_1 \cup \rA_2 \cup \cdots \cup \rA_k|] \leq 
q\left(1 - \exp(-a/q)\right) + o(q^{1-\epsilon}) \\
= \cc(a) + o(q^{1-\epsilon}).
\end{align*}
\end{lemma}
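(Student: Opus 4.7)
The plan is to prove the lemma via Poissonization. I would introduce an auxiliary model in which, for each phase $i$, the number of samples drawn from $\pi_i$ is a random variable $N_i \sim \mathrm{Poisson}(a_i)$, independently across $i$, and write $\rA_i^{P}$ for the set of distinct elements produced. By standard Poisson thinning, the total count of samples landing on each $x \in [q]$ is an independent $\mathrm{Poisson}(\lambda_x)$ where $\lambda_x := \sum_i a_i \pi_i(x)$, so $\Pr[x \in \bigcup_i \rA_i^{P}] = 1 - e^{-\lambda_x}$ and
\[
\E\bigl[\,|\textstyle\bigcup_i \rA_i^{P}|\,\bigr] \;=\; \sum_{x=1}^q (1-e^{-\lambda_x}) \;\leq\; q\bigl(1-e^{-a/q}\bigr) \;=\; \cc(a),
\]
where the inequality is Jensen's applied to the concave map $\lambda \mapsto 1-e^{-\lambda}$ together with $\sum_x \lambda_x = \sum_i a_i = a$.

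Next I would couple the true process to the Poissonized one. For each phase $i$ generate a single i.i.d.\ stream $X_{i,1}, X_{i,2}, \ldots \sim \pi_i$, draw the $N_i$'s independently, and set $\rA_i := \{X_{i,1},\ldots,X_{i,a_i}\}$ and $\rA_i^{P} := \{X_{i,1},\ldots,X_{i,N_i}\}$. Since one of these sets contains the other according to the sign of $N_i - a_i$, pointwise $|\rA_i \triangle \rA_i^{P}| \leq |N_i - a_i|$. Combining across $i$ using the general inequality $|\bigcup_i A_i \triangle \bigcup_i B_i| \leq \sum_i |A_i \triangle B_i|$ and passing to cardinalities,
\[
\Bigl|\,|\textstyle\bigcup_i \rA_i|\;-\;|\textstyle\bigcup_i \rA_i^{P}|\,\Bigr| \;\leq\; \sum_{i=1}^k |N_i - a_i|.
\]
Taking expectations, using $\E|N_i - a_i| \leq \sqrt{\mathrm{Var}(N_i)} = \sqrt{a_i}$, and applying Cauchy--Schwarz,
\[
\E\bigl[\,|\textstyle\bigcup_i \rA_i|\,\bigr] \;\leq\; \cc(a) + \sum_{i=1}^k \sqrt{a_i} \;\leq\; \cc(a) + \sqrt{k\,a}.
\]

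Finally I would verify that the de-Poissonization error $\sqrt{ka}$ is $o(q^{1-\epsilon})$ under the stated hypotheses. With $a \leq \epsilon q\ln q$ and $k \leq e q^\epsilon$, we get $\sqrt{ka} \leq \sqrt{e\epsilon}\,q^{(1+\epsilon)/2}\sqrt{\ln q}$, and the ratio $\sqrt{ka}/q^{1-\epsilon} = O\bigl(q^{(3\epsilon-1)/2}\sqrt{\ln q}\bigr)$ tends to $0$ precisely when $\epsilon < 1/3$; this is exactly where the hypothesis on $\epsilon$ enters.

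The hard step is arranging for the Poissonization error to be absorbed into $o(q^{1-\epsilon})$. A direct expansion of $1-\prod_i (1-\pi_i(x))^{a_i}$ via $\ln(1-\pi) \geq -\pi - O(\pi^2)$ only yields a correction of order $\sum_i a_i \sum_x \pi_i(x)^2$, which can be as large as $\Theta(a)$ for concentrated $\pi_i$'s; indeed, disjoint point-mass phases push $\E[|\bigcup_i \rA_i|]$ above $q(1-e^{-a/q})$ by as much as $\Theta(a^2/q)$ in the worst case. Passing to the mean parameters $\lambda_x$ via Poissonization sidesteps this: concavity of $1-e^{-\lambda}$ then applies cleanly, and only the milder sampling noise $\sum_i |N_i-a_i|$ survives, which the joint bounds on $a$ and $k$ control via Cauchy--Schwarz exactly at the threshold $\epsilon<1/3$.
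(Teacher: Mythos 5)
Your proof is correct, and it takes a genuinely different route from the paper's. The paper does not Poissonize: it splits $[q]$ into a ``heavy'' set $B$, defined as the union over $i$ of the $q^{1-2\epsilon-\lambda}$ elements with largest $\pi_i$-mass (so that $|B| \le k\,q^{1-2\epsilon-\lambda} = o(q^{1-\epsilon})$ thanks to $k\le eq^{\epsilon}$), gives up on $B$ entirely, and for $x\notin B$ uses $\pi_i(x)\le q^{-(1-2\epsilon-\lambda)}$ to write $1-\pi_i(x)\ge \exp\bigl(-\pi_i(x)(1+2q^{-(1-2\epsilon-\lambda)})\bigr)$, then applies AM--GM across coordinates to land on $q-q\exp\bigl(-(1+o(1))a/q\bigr)$; the constraint $\epsilon<1/3$ enters there, in forcing $q^{-(1-2\epsilon-\lambda)}\cdot(a/q)=o(q^{-\epsilon})$. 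Your Poissonization approach avoids the heavy-element surgery altogether: independence across coordinates and concavity of $\lambda\mapsto 1-e^{-\lambda}$ deliver the exact main term $q(1-e^{-a/q})$ in one step, and the whole error is isolated in the de-Poissonization coupling, bounded explicitly by $\sum_i\sqrt{a_i}\le\sqrt{ka}$; the threshold $\epsilon<1/3$ then appears transparently from $\sqrt{ka}/q^{1-\epsilon}=\Theta\bigl(q^{(3\epsilon-1)/2}\sqrt{\ln q}\bigr)$. Your argument is cleaner and yields an explicit error bound, whereas the paper's is more elementary in the sense of needing no coupling; both hit the same exponent constraint and both rely on the bound $k\le eq^{\epsilon}$, though you use it only through $a$ and $k$ in the final Cauchy--Schwarz, while the paper needs it separately to keep $|B|$ small.
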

\newcommand{\proofofphasedcc}{
\begin{proof}[Proof of Lemma~\ref{PCC_Lemma}]
Consider a constant $\lambda \ll \epsilon$.
For $i =1,2,\ldots,k$, let $B_i$ be the set of $q^{1-2\epsilon-\lambda}$ elements of
$[q]$ taking the topmost values in $\pi_i$. Let $B = B_1 \cup
B_2 \cup \cdots \cup B_k$; note that $|B| \leq k q^{1-2\epsilon-\lambda} =
o(q^{1-\epsilon})$. Then, $\E[|\rA_1 \cup \rA_2 \cup \ldots \rA_k|]$ is at most
\[|B| + \sum_{x\not\in B} \left(1- \prod_{i=1}^k (1-\pi_i(x))^{a_i}\right).\]
Now, for $x \not\in B$, we have $\pi_i(x) \leq 1/q^{1-2\epsilon-\lambda}$, and 
\begin{align*}
1 - \pi_i(x) \geq \exp\bigl(-\pi_i(x)/\bigl(1-\pi_i(x)\bigr)\bigr)\\
\geq \exp\bigl(-\pi_i(x) \bigl(1+2/q^{1-2\epsilon-\lambda}\bigr)\bigr).
\end{align*}
Then, by the AM-GM inequality we have the upper-bound
\begin{align*}
|B| + q - q \exp\left(- (1+2/q^{1-2\epsilon-\lambda})(1/q)\sum_{i,x} a_i \pi_i(x)  \right).    
\end{align*}

Our claim follows from this because $\exp\left(- (1+2/\sqrt{q})(1/q)\sum_{i,x} a_i \pi_i(x)  \right) \geq \exp(-a/q)-o(1/q^{1-2\epsilon})\geq \exp{(-a/q) - o(q^{1-\epsilon})/q}$.
\end{proof}
}
Our next target is to understand the number of iterations we wish to perform, i.e., the number of times we need to enlarge the sizes of the sets surviving the ensemble $\mathcal{R}$ so that the list size hits the target of $\epsilon q\ln{q}$, where $\epsilon <1/6$. 
At the first stage we will pick up sets of rows of size about $\ell_1=q$, and expect the image size to be close to $\cc(\ell_1)$; we then prune out the exceptional columns. In the next stage, we pick sets of size about $\ell_2=q-\cc(\ell_1)$ and expect the combined image size to be close to $\cc(\ell_1+\ell_2)$. Hence, in the third iteration we pick sets of size close to $\ell_3=q-\cc(\ell_1+\ell_2)$, and so on for the subsequent iterations. We are interested in the list size after $k$ iterations, i.e, $\ell_{\leq k} \coloneqq \sum_{i=1}^{k}\ell_i$. We have the following proposition, which is proved in the appendix.

\begin{proposition}
\label{Target_Proposition}
Let $\ell_1=q$, and for $i\geq 1$ let $\ell_{i+1}= q- \cc(\sum_{j=1}^{i}\ell_j)$. 
Suppose $k=eq^\epsilon$ for some $\epsilon<1$, then, $\ell_{\leq k}\geq \epsilon q\ln{q}$.
\end{proposition}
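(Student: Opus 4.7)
The plan is to unpack the recursion into explicit form and then linearize it. Writing $S_i \coloneqq \sum_{j=1}^{i}\ell_j$ and substituting the closed-form $\cc(a) = q(1 - \exp(-a/q))$, the definition reduces to
\[ S_{i+1} = S_i + q\exp(-S_i/q), \qquad S_1 = q. \]
After the normalization $T_i \coloneqq S_i/q$, this becomes $T_1 = 1$ and $T_{i+1} = T_i + \exp(-T_i)$. The continuous analog $dT/dt = \exp(-T)$ integrates to $T(t) = \ln(t + e - 1)$, so one should expect $T_k$ to grow like $\ln k$, which for $k = eq^\epsilon$ yields roughly $1 + \epsilon \ln q$; the goal is exactly to make this heuristic rigorous on the discrete side.

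To do so, I would substitute $U_i \coloneqq \exp(T_i)$, transforming the recursion into $U_1 = e$ and $U_{i+1} = U_i \exp(1/U_i)$. Applying the elementary inequality $\exp(x) \geq 1 + x$ with $x = 1/U_i$ immediately gives $U_{i+1} \geq U_i + 1$, and hence by induction $U_i \geq e + (i-1)$. Taking logarithms, $T_i \geq \ln(e + i - 1)$.

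Finally I would plug in $k = eq^\epsilon$. Since $e - 1 > 0$, we get $U_k \geq e + eq^\epsilon - 1 \geq eq^\epsilon$, hence $T_k \geq \ln(eq^\epsilon) = 1 + \epsilon \ln q \geq \epsilon \ln q$. Multiplying through by $q$ recovers $\ell_{\leq k} = S_k \geq \epsilon q \ln q$, which is the desired bound.

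There is essentially no serious obstacle here: the substitution $U_i = \exp(T_i)$ reduces the nonlinear recursion to a strictly linear lower bound through a single application of $\exp(x) \geq 1 + x$, and the scaling $k = eq^\epsilon$ is precisely tuned to the continuous solution. The only place one needs even mild care is in reading off the constant, where the inequality $e + eq^\epsilon - 1 \geq eq^\epsilon$ absorbs the $-1$ using $e > 1$; this is the reason the proposition quantifies over $\epsilon < 1$ without any further restriction.
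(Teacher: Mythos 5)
Your proof is correct, and it takes a genuinely different route from the paper's. The paper proves the proposition by a bucketing/counting argument: it observes that if $\ell_{\leq i} \in [jq, (j+1)q]$ then $\ell_{i+1} \geq q/e^{j+1}$, so at most $e^{j+1}$ indices $i$ can have $\ell_{\leq i}$ in that band; summing the geometric series $e + e^2 + \cdots + e^{\epsilon \ln q}$ yields a contradiction if $\ell_{\leq k}$ were below $\epsilon q \ln q$. Your approach instead linearizes the recursion directly: the substitution $U_i = \exp(S_i/q)$ turns $T_{i+1} = T_i + \exp(-T_i)$ into $U_{i+1} = U_i\exp(1/U_i) \geq U_i + 1$ via $\exp(x)\geq 1+x$, giving the clean bound $U_i \geq e + (i-1)$ with no case analysis. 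This is arguably slicker and in fact delivers the slightly sharper conclusion $T_k \geq 1 + \epsilon\ln q$. One small quibble with your closing remark: the inequality $e + eq^\epsilon - 1 \geq eq^\epsilon$ holds for \emph{every} $\epsilon$ (it only uses $e > 1$), so it is not the reason the proposition restricts to $\epsilon < 1$; that restriction is inherited from the surrounding application rather than forced by your estimate.
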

\begin{proof}
(The series $\set{\ell_{\leq k}}$ tends to $q\ln{q}$.)
\end{proof}
\newcommand{\proofofTargetProposition}{
\begin{proof}[Proof of Proposition~\ref{Target_Proposition}]
Suppose $\ell_{\leq i}\in [jq,(j+1)q]$ for some $j\geq 0$, then, $\ell_{i+1}\geq q/e^{j+1}$. Therefore, the number of $i$'s for which $\ell_{\leq i}\in[jq,(j+1)q]\leq e^{j+1}$. Suppose $\ell_{\leq k}< \epsilon q\ln{q}$, then as a contradiction we have
\begin{align*}
    k< e + e^2 + \cdots + e^{\epsilon \ln{q}} \leq e q^\epsilon.
\end{align*}
\end{proof}
}

Finally, we need a lemma where we glue all the steps mentioned in the introduction. At each iteration $k$, we maintain an ensemble $\mathcal{R}^k$ satisfying the requisite properties.

We call a distribution $\mathbf{D}$ on $\cP({[m]})$ a $(g_1,\ldots,g_k)$-phased coupon collector distribution if $\mathbf{D}=D_1^{\set{g_1}}\vee D_2^{\set{g_2}}\ldots\vee D_k^{\set{g_k}}$ where each $D_i$ is a probability mass function on $[m]$. The following lemma tracks how the parameters change with each iteration.

\begin{lemma}[Iteration Lemma]
\label{Iteration_Lemma}
Let $k\leq q^\epsilon$ for some $\epsilon<1/5$. Let $\gamma = \gamma' = q^{-2\epsilon}/2$ and $\delta' = q^{-5\epsilon}/4$. Assume $n\leq \exp{(\delta'q)}\log_2{m}/(48\cdot q^\epsilon \log_2{q})$. Then, there exists a partition $\mathcal{H}_1(k)\sqcup\mathcal{H}_2(k)$ of the columns of $C$, a universe of rows $U_k\subseteq [m]$, an ensemble $\mathcal{R}^k=(R_1, R_2, \ldots R_{L_k})$, integers $(g_1,\ldots,g_k)$ and a $(g_1,\ldots,g_k)$-phased coupon collector distribution $\mathbf{D}_k$ such that: 
\begin{alphenum}
\item $g_1=q-2$, and $g_{i+1}=q-\cc(g_i)-(i+1)\gamma q -2$
\item $\forall i\in [L_k]\tcomma |R_i|= g_{\leq k} \geq \ell_{\leq k} -2k -k^2\gamma q/2$ 
\item $\forall h \in \mathcal{H}_2(k)\tcomma \forall i \in [L_k]\tcomma |h(R_i\cup U_k)|\leq q-1$
\item $\forall h\in \mathcal{H}_1(k)$, $\mathcal{R}^k$ is a $((k+1)\gamma,\gamma^2-k\delta')$-sampler for $\mathbf{D}_k$ wrt $h$   
\item $\forall h\in \mathcal{H}_1(k)\tcomma \forall i\in [L_k]\tspace \bigl||h(R_i)|-\E{[h(\mathbf{D}_k)]}\bigr|\leq (k+1)\gamma q$
\item $\log_2{|U_k|}\geq \log_2{m} - k\log_2{q}\cdot 24\exp{(-\delta'q)}n$.
\end{alphenum}
\end{lemma}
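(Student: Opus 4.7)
The plan is to induct on $k$, using Lemma \ref{Composition_Lemma} as the inductive engine and Lemma \ref{PCC_Lemma} to control the expected image sizes on the good columns. The running state at each stage $k$ will consist of: the universe $U_k\subseteq [m]$, the ensemble $\mathcal{R}^k$ of size $L_k$ whose members all have common size $g_{\leq k}$, the partition $(\mathcal{H}_1(k),\mathcal{H}_2(k))$ of columns, the integers $(g_1,\ldots,g_k)$, and the product distribution $\mathbf{D}_k=D_1^{\{g_1\}}\vee \cdots \vee D_k^{\{g_k\}}$, where $D_1$ is uniform on $[m]$ and each $D_{i+1}$ is uniform on $U_i$. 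The sample size $g_{k+1}$ is chosen so that $g_{\leq k+1}$ tracks the coupon-collector target $\ell_{\leq k+1}$ of Proposition \ref{Target_Proposition} up to the slack recorded in (b).

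For the base case $k=1$, I would draw $L_1\approx \exp(\gamma^2 q)$ independent size-$(q-2)$ subsets of $[m]$ as $\mathcal{R}^1$. For each column $h$ and each weight function $f:[q]\to\{0,1\}$, Lemma \ref{McDiarmid} gives $\Pr[|f(h(R))-\E f(h(\R))|\geq \gamma q]\leq 2\exp(-2\gamma^2 q)$; a binomial tail bound on the $L_1$ samples, combined with a union bound over the $2^q$ weights and the $n$ columns, selects $\mathcal{H}_2(1)$ to be the columns where $\mathcal{R}^1$ fails to be a $(\gamma,\gamma^2)$-sampler. For each $h\in\mathcal{H}_2(1)$, a pigeonhole over $[q]$ produces a symbol $s_h$ missing from $h(R_i)$ for at least a $1/q$-fraction of the $R_i$; I restrict $U_1$ by discarding all rows of $h$-value $s_h$ (losing a factor $q$ in $|U|$) and pass to the corresponding sub-ensemble. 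Properties (a)--(f) then hold at $k=1$, with (c) immediate from $s_h\notin h(R_i)\cup h(U_1)$.

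For the inductive step $k\to k+1$, I draw $L_{k+1}$ fresh size-$g_{k+1}$ subsets $R'_1,\ldots,R'_{L_{k+1}}$ independently from $D_{k+1}$, form $\mathcal{R}^{k+1}=(R_i\cup R'_i)_i$, and apply Lemma \ref{Composition_Lemma} column-by-column on $\mathcal{H}_1(k)$ with $\gamma'=\gamma$, $t=g_{k+1}/q\leq 1$, $\delta=\gamma^2-k\delta'$; the two preconditions $\delta\leq 2(\gamma')^2/t$ and $\delta>\delta'$ are immediate from $\epsilon<1/5$ and $k\leq q^\epsilon$. Per column this yields with probability $\geq 1-12\exp(-\delta' q)$ that $\mathcal{R}^{k+1}$ is a $((k+2)\gamma,\gamma^2-(k+1)\delta')$-sampler for $\mathbf{D}_{k+1}$ wrt $h$, and that $\bigl||h(R_i\cup R'_i)|-\E[|h(\mathbf{D}_{k+1})|]\bigr|\leq (k+2)\gamma q$ for every $i$. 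Applying Lemma \ref{PCC_Lemma} to the phased distribution $\mathbf{D}_{k+1}$ gives $\E[|h(\mathbf{D}_{k+1})|]\leq \cc(g_{\leq k+1})+o(q^{1-\epsilon})$, which delivers (d) and (e). I set $\mathcal{H}_2(k+1)$ to be $\mathcal{H}_2(k)$ (for which (c) is inherited since $U_{k+1}\subseteq U_k$) together with the columns where the Composition Lemma failed; the expected size of the new part is at most $12n\exp(-\delta' q)$, and a Markov step fixes a realization with at most $24n\exp(-\delta' q)$ new bad columns. The pigeonhole/universe-restriction from the base case then yields (f), with an extra factor $\log_2 q$ per newly bad column compounded over the $k$ stages.

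The main obstacle will be the compatibility of (c) and (e) at the moment a column becomes bad. If $h\in\mathcal{H}_1(k)$ turns bad at stage $k+1$, then (e) at stage $k$ still gives $|h(R_i)|\leq \cc(g_{\leq k})+(k+1)\gamma q+o(q^{1-\epsilon})$ for every $R_i\in\mathcal{R}^k$; since $g_{\leq k}\leq \epsilon q\ln q$ forces $\cc(g_{\leq k})\leq q-q^{1-\epsilon}$, and $(k+1)\gamma q\leq q^{1-\epsilon}/2$ with the chosen parameters, the image retains slack $\geq q^{1-\epsilon}/2$, so a pigeonhole symbol $s_h$ missing from a significant fraction of the $h(R_i)$ exists. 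Restricting $U_{k+1}$ to exclude all rows of $h$-value $s_h$ and passing to the surviving sub-ensemble gives (c) for the newly bad column, and the exclusion persists for all future $k'>k+1$ because subsequent increments are drawn from $U_{k'-1}\subseteq U_{k+1}$. The cumulative book-keeping in (f), namely the compounded loss $k\cdot 24n\exp(-\delta' q)\log_2 q$, is precisely what the hypothesis $n\leq \exp(\delta' q)\log_2 m/(48 q^\epsilon\log_2 q)$ is calibrated to bound.
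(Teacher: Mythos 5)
Your overall scaffold — induction on $k$, driving the step with the Ensemble Composition Lemma (Lemma~\ref{Composition_Lemma}), tracking the image sizes on the good columns via the Phased Coupon Collector Lemma (Lemma~\ref{PCC_Lemma}), and a Markov argument to limit the number of newly bad columns per stage — matches the paper. The gap is in how you neutralize a bad column.

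You propose to find a symbol $s_h$ missing from $h(R_i)$ for at least a $1/q$-fraction (or, at later stages, a ``significant fraction'') of the ensemble members, then discard the rows of $h$-value $s_h$ from $U$ and \emph{pass to the corresponding sub-ensemble}. This ensemble sub-selection is fatal. The ensemble size at any stage is $\exp(\Theta(q^{1-4\epsilon}))$, a quantity independent of $m$, whereas the number of bad columns accumulated over all stages is as large as $k\cdot 24\exp(-\delta'q)n$, which under the hypothesis on $n$ can be of order $\log_2 m/\log_2 q$ and hence unbounded as $m\to\infty$. Shrinking the ensemble by even a polynomial factor in $q$ for each bad column wipes it out long before the induction finishes. (There is also a secondary problem: the pigeonhole criterion ``$s_h$ missing from many $R_i$'' says nothing about how frequently $s_h$ occurs among the rows of $U$, so discarding all rows of $h$-value $s_h$ is \emph{not} a loss of a factor $q$ in $|U|$ — the loss is uncontrolled.)

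The paper avoids both issues by pruning \emph{only} the universe, never the ensemble. When $h$ turns bad, one picks the \emph{most frequent} symbol in $h$ restricted to the current universe — call it $1$ — and keeps only the rows mapping to $1$; this loses at most a factor $q$ of the universe, giving exactly (f). Crucially, every $R_i$ stays in the ensemble: since every row subsequently added to any $R_i$ comes from this pruned universe and hence maps to the single symbol $1$, the image of $R_i$ in $h$ can increase by at most one symbol, and $|h(R_i)|\leq q-2$ held at the moment $h$ turned bad (bounded via property (e) at the previous stage together with Lemma~\ref{PCC_Lemma}). This yields (c) uniformly over the whole ensemble without any sub-selection, and it is precisely the point where your proposal diverges from a working argument.
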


\begin{proof}
We will use induction on $k$.
For $k=1$ we have $g_1=q-2$. We use Lemma~\ref{Composition_Lemma} with $\R$ being the constant $\emptyset$, and $\mathcal{R}=\set{\emptyset}$. Clearly, $\mathcal{R}$ is a $(\gamma,\gamma^2)$-sampler for $\R$. Let $D$ be the uniform distribution over $[m]$ and let $\mathcal{R}'=(R'_1, R'_2, \ldots, R'_{s})$ be obtained by taking $s=\exp{((\gamma^2 -\delta')q)}$ independent samples according to $\R'\sim D^{\set{q-2}}$. So, $\mathbf{D}_1=D^{\set{q-2}}$. 
For a fixed column $h$ we have the following: 
with probability $1-12\exp{(-\delta'q)}$ over the random choices, the composed ensemble 
\[ \widetilde{\mathcal{R}} = \bigl(R'_1, R'_2,\ldots, R'_{s})\]
is good wrt $h$, i.e., $\widetilde{\mathcal{R}}$ is a $(2\gamma, \gamma^2 - \delta')$-sampler for $\R'$ wrt the column $h$, and furthermore $\forall i \in [s]$,
\begin{equation*}  
\Bigl|\bigl|h(R_i')\bigr|-\E{\Bigl[\bigl|h(\R')\bigr|\Bigr]}\Bigr| \leq 2\gamma q.
\end{equation*}  

Hence, on expectation only $12\exp{(-\delta'q)}n$ columns are bad. Therefore, with probability at least $1/2$ at most $24\exp{(-\delta' q)}n$ columns are bad. Also, the probability of an $R_i'\in \mathcal{R'}$ having size less than $q-2$ (because some two of our $q-2$ choices of rows picked the same row)  is at most $q^2/m$. Thus, by the union bound the probability of (b) not holding is at most $s\cdot q^2/m$ which is less than $1/2$. Therefore, there is choice of $\widetilde{\mathcal{R}}$, which we call $\mathcal{R}^1$, such that at most $24\exp{(-\delta' q)}n$ columns are bad and (b) holds. The set of bad columns is $\mathcal{H}_2(1)$ and the set of good columns is $\mathcal{H}_1(1)$. Then, clearly (d) and (e) are true. 

Let $\mathcal{H}_2(1)=\set{h_1,\ldots,h_b}$ where $b\leq 24\exp{(-\delta' q)}n$ and WLOG assume that $1$ is the most frequent symbol in $h_1$. Retain only those rows in $U$ that correspond to the symbol $1$ in $h_1$. Call this pruned universe $U'$: we have ensured that so long as we add rows to $R_i\in \mathcal{R}^1$ only from $U'$, the image size in $h_1$ is at most $h_1(R)+1\leq q-1$. Thus, by taking a multiplicative hit of at most $1/q$ we have rendered $h_1$ ineffective. Iterating this over $\mathcal{H}_2(1)$ we take a multiplicative hit of $\left(\frac{1}{q}\right)^{b}$. Hence, we obtain a universe $U'$, which will be $U_1$, such that $\log_2{|U'|}=\log_2{|U_1|}\geq \log_2{m}-24\exp{(-\delta' q)}n\log_2{q}$. This establishes (c) and (f). 
This establishes the claims for $k=1$; the induction step in general is similar.

Now, as our IH let us assume that for $(k-1)$ we have the partition $\mathcal{H}_1(k-1)\sqcup\mathcal{H}_2(k-1)$, $U_{k-1}\subseteq[m]$, $\mathcal{R}^{k-1}$, integers $(g_1,\ldots,g_{k-1})$ and $\mathbf{D}_{k-1}$ such that (a) through (f) are satisfied. 
Then, we repeat the above argument. We have $g_{k}=q-\cc(g_{k-1})-k\gamma q -2$.
We use Lemma~\ref{Composition_Lemma} for $h\in \mathcal{H}_1(k-1)$ with $\R$ being $\mathbf{D}_{k-1}$, and $\mathcal{R}=\mathcal{R}^{k-1}$ which is a $(k\gamma,\gamma^2-(k-1)\delta')$-sampler for $\mathbf{D}_{k-1}$ wrt $h$. Let $(R_1,\ldots,R_s)$ be obtained by $s=\exp{(\gamma^2-k\delta')}$ independent samples from $\mathcal{R}^{k-1}$.
Let $D$ be the uniform distribution over $U_{k-1}$ and let $\mathcal{R}'=(R'_1, R'_2, \ldots, R'_{s})$ be obtained by taking $s$ independent samples according to $\R'\sim D^{\set{g_k}}$.
We let $\mathbf{D}_k =\mathbf{D}_{k-1} \vee D^{\set{g_k}}$.
For a fixed column $h$ we have the following:
wp $1-12\exp{(-\delta'q)}$ over the random choices, the composed ensemble 
\[ \widetilde{\mathcal{R}} = \bigl(R_1\cup R'_1, R_2\cup R'_2,\ldots, R_s\cup R'_{s})\]
is good wrt $h$, i.e., $\widetilde{\mathcal{R}}$ is a $((k+1)\gamma,\gamma^2-k\delta')$-sampler for $\mathbf{D}_k$ wrt $h$, and furthermore $\forall i \in [s]$,
\begin{equation*}  
\Bigl|\bigl|h(R_i\cup R_i')\bigr|-\E{\Bigl[\bigl|h(\mathbf{D}_{k})\bigr|\Bigr]}\Bigr| \leq (k+1)\gamma q.
\end{equation*}  

Hence, on expectation only $12\exp{(-\delta'q)}n$ columns of $\mathcal{H}_1(k-1)$ are bad. Therefore, with probability at least $1/2$ at most $24\exp{(-\delta' q)}n$ columns of $\mathcal{H}_1(k-1)$ are bad. Also, the probability of an $R_i\cup R_i'\in \widetilde{\mathcal{R}} $ having size less than $g_{\leq k}$ (because some two of our $q-\cc(g_{k-1})-k\gamma q -2$ choices of rows for $R_i'$ picked the same row of collided with some row in $R_i$)  is at most $(q\ln q)^2/|U_{k-1}|$. Thus, by the union bound the probability of (b) not holding is at most $s\cdot (q\ln{q})^2/|U_{k-1}|$ which is less than $1/2$. Therefore, there is choice of $\widetilde{\mathcal{R}}$, which we call $\mathcal{R}^k$, such that at most $24\exp{(-\delta' q)}n$ columns of $\mathcal{H}_1(k-1)$ are bad and (b) holds. Combining these bad columns with $\mathcal{H}_2(k-1)$  we obtain $\mathcal{H}_2(k)$ and the columns not in $\mathcal{H}_2(k)$ form the set $\mathcal{H}_1(k) = \mathcal{H}_2(k)$. Then, clearly (d) and (e) are true. 

Let $\mathcal{H}_2(k)\setminus \mathcal{H}_2(k-1)=\set{h_1,\ldots,h_b}$ where $b\leq 24\exp{(-\delta' q)}n$ and WLOG assume that $1$ is the most frequent symbol in $h_1$. Retain only those rows in $U_{k-1}$ that correspond to the symbol $1$ in $h_1$. Call this pruned universe $U'$: this pruning ensures that so long as we add rows to $R_i\in R^k$ only from $U'$, the image size in $h_1$ is at most $q-1$. Thus, by taking a multiplicative hit of at most $1/q$ we have rendered $h_1$ ineffective. Iterating this over $\mathcal{H}_2(k)$ we take a multiplicative hit of $\left(\frac{1}{q}\right)^{b}$. Hence, we obtain a universe $U'$, which will be $U_k$, such that $\log_2{|U'|}=\log_2{|U_k|}\geq \log_2{|U_{k-1}|}-24\exp{(-\delta' q)}n\log_2{q}\geq \log_2{m} - k\log_2{q}\cdot 24\exp{(-\delta'q)}n $. Together with property (c) of $U_{k-1}$ this establishes (c) and (f). 
This completes the induction step.
\end{proof}

\begin{proof}[Proof of Theorem~\ref{Result1} (main result of the paper)]
Fix an $\epsilon' < 1/6$ and let $C$ be an $\epsilon'q\ln{q}$-list-decoding-code for the $q/(q-1)$ channel. Choose $\lambda \ll \epsilon'$ and let $\epsilon =\epsilon' + \lambda$. Let $q$ be sufficiently large so that $k= q^{\epsilon}\geq e q^{\epsilon'+\lambda/2}$. 
We will appeal to Lemma~\ref{Iteration_Lemma} (with $k$ and $\epsilon$) and assume that $n\leq \exp{(\delta'q)}\log_2{m}/(48\cdot q^\epsilon \log_2{q})$. Then, by choosing a set of rows $R$ in the ensemble $\mathcal{R}^k$ and using (b) and Proposition~\ref{Target_Proposition} we obtain that $|R|\geq \epsilon'q\ln{q} $. However, using (c) we have that for all columns $h\in \mathcal{H}_2(k)$, $|h(R)|\leq q-1$. Also, using (e) and Lemma~\ref{PCC_Lemma} we obtain that for all $h\in \mathcal{H}_1(k)$, $|h(R)|<q$. This is a contradiction and hence $n>\exp{(\delta'q)}\log_2{m}/(48\cdot q^\epsilon \log_2{q})$ or for sufficiently large $q$ we have $n>\Omega(\exp{(q^{1-6\epsilon'}/8)}\log_2{m})$.

We note that it is possible by a more careful analysis to improve the bound of $\Omega(\exp{(q^{1-6\epsilon'}/8)}\log_2{m})$ to $\Omega(\exp{(q^{1-4\epsilon'}/8)}\log_2{m})$ in which case we may apply the bound till a list size of $q\ln{q}/4$. This bound is obtained by modifying Lemma~\ref{Iteration_Lemma} to accommodate $\gamma'\tand \delta'$ which vary across the induction steps and being more scrupulous about the argument in the preceding paragraph.

\end{proof}

\section*{Acknowledgements}

We are grateful to Prahladh Harsha for the numerous detailed discussions that led to the result reported in this paper, and also for proof-reading it. We also thank Ramprasad Saptharishi for his help with Lemma \ref{PCC_Lemma}. 

\bibliographystyle{IEEEtran}
\bibliography{bibi}

\appendix

\proofofphasedcc

\proofofTargetProposition

\end{document}